\documentclass[journal,draftcls,onecolumn,12pt,twoside]{IEEEtran}

\usepackage{amsmath,amsfonts}
\usepackage{amsthm}
\usepackage{algorithmic}
\usepackage{algorithm}
\usepackage{array}
\usepackage[caption=false,font=normalsize,labelfont=sf,textfont=sf]{subfig}
\usepackage{textcomp}
\usepackage{stfloats}
\usepackage{url}
\usepackage{verbatim}
\usepackage{graphicx}
\usepackage{cite}
\usepackage{bm}

\hyphenation{op-tical net-works semi-conduc-tor IEEE-Xplore}

\newtheorem{theorem}{Theorem}
\newtheorem{lemma}{Lemma}

\newtheorem{define}{Definition}

\newtheorem{remark}{Remark}


\newcommand{\norm}[1]{\left\|#1\right\|}
\newcommand{\norms}[1]{\norm{#1}^2}
\newcommand{\br}[1]{\left(#1\right)}
\newcommand{\brs}[1]{\left[#1\right]}
\newcommand{\brc}[1]{\left\{#1\right\}}
\newcommand{\card}[1]{\left|#1\right|}

\newcommand{\Gens}{\mathcal{G}\br{M,n,P}}
\newcommand{\Bens}{\mathcal{B}\br{M,n,P}}


\newcommand\prob[1]{\Pr\left[#1\right]}

\newcommand\rvec[1]{\mathsf{\mathbf{#1}}}

\newcommand\dvec[1]{\bm{#1}}
\newcommand\dset[1]{\mathcal{#1}}

\newcommand\dmat[1]{\bm{#1}}

\newcommand{\bz}{\mathbf{z}}
\newcommand{\bcm}{\mathbf{c}_{\dset{M}}}
\newcommand{\bcf}{\mathbf{c}_{\dset{F}}}
\newcommand{\bcc}{\mathbf{c}_{\dset{C}}}
\newcommand{\bzero}{\mathbf{0}}
\newcommand{\bI}{\mathbf{I}}
\newcommand{\be}{\pmb{\eta}}
\newcommand{\bA}{\mathbf{A}}
\newcommand{\bB}{\mathbf{B}}

\newcommand{\sEt}{\dset{U}_{e}\br{t}}
\newcommand{\sEe}{\dset{E}_{e}}
\newcommand{\sEr}{\dset{E}_{r}}
\newcommand{\sEcr}{\dset{E}^{c}_{r}}
\newcommand{\sUe}{\dset{U}_{e}\br{t}}
\newcommand{\sUer}{\dset{U}_{e,r}\br{t}}
\newcommand{\sUcr}{\dset{U}^{c}_{r}\br{t}}

\begin{document}

\title{Energy Efficiency of Unsourced Random Access over the Binary-Input Gaussian Channel}

\author{Anton Glebov, Pavel Rybin, Kirill Andreev, Alexey Frolov
\thanks{Anton Glebov, Pavel Rybin, Kirill Andreev and Alexey Frolov are with the Center for Next Generation Wireless and IoT (NGW), Skolkovo Institute of Science and Technology, Moscow, Russia (emails: anton.glebov@skoltech.ru, p.rybin@skoltech.ru, k.andreev@skoltech.ru, al.frolov@skoltech.ru).}
\thanks{The research was carried at Skolkovo Institute of Science and Technology and supported by the Russian Science Foundation (project no. 18-19-00673), \protect\url{https://rscf.ru/en/project/18-19-00673/}}
}

\maketitle

\begin{abstract}
We investigate the fundamental limits of the unsourced random access over the binary-input Gaussian channel. By fundamental limits, we mean the minimal energy per bit required to achieve the target per-user probability of error. The original method proposed by Y.~Polyanskiy (2017) and based on Gallager's trick does not work well for binary signaling. We utilize Fano's method, which is based on the choice of the so-called ``good'' region. We apply this method for the cases of Gaussian and binary codebooks and obtain two achievability bounds. The first bound is very close to Polyanskiy's bound but does not lead to any improvement. At the same time, the numerical results show that the bound for the binary case practically coincides with the bound for the Gaussian codebook. Thus, we conclude that binary modulation does not lead to performance degradation, and energy-efficient schemes with binary modulation do exist.
\end{abstract}

\begin{IEEEkeywords}
wireless communication; unsourced random access; energy efficiency; binary-input Gaussian channel.
\end{IEEEkeywords}

\section{Introduction}

The paper deals with the \textit{unsourced random access} (URA) -- a new paradigm proposed in \cite{polyanskiy2017perspective} for massive machine type communications (mMTC). The challenges posed by mMTC are (a) the need to handle a large number of devices, (b) short data packets, and (c) sporadic access. URA employs grant-free transmission, meaning that packets are sent to the network without prior communication with the base station, and it uses the same encoder for all users (or equivalently, the same codebook). This decision is supported by the fact that using different encoders would increase the receiver complexity as the decoder would have to first identify which encoders were used. The same codebook assumption decouples the identification and decoding tasks, and the URA receiver only has to decode the messages without identifying the users who sent them.

In~\cite{polyanskiy2017perspective}, the URA problem is stated from an information-theoretic perspective. The fundamental limits for the Gaussian multiple-access channel (MAC) are given in~\cite{polyanskiy2017perspective}. The asymptotic improvement is presented in \cite{ZPT-isit19}. Both papers consider the case where the number of active users is fixed and known at the receiver. In~\cite{achRandUser2023}, these results are extended to the scenario where the number of active users is random and unknown at the receiver. Various low-complexity URA schemes have been proposed for the Gaussian MAC. The main approaches are $T$-fold irregular repetition slotted ALOHA protocol with collision resolution \cite{ordentlich2017low, vem2017user, Marshakov2019Polar}, sparse interleave division multiple access (IDMA) scheme \cite{AmalladinneJoint2023}, random spreading and correlation-based energy detection \cite{AmalladinnePolar2020, Duman2021, PradhanLDPCSC2021} and coded compressed sensing \cite{codedCS2020, fengler2020sparcs, AmalladinneAMP2021, Amalladinne2022}. The scheme \cite{PradhanLDPCSC2021} offers state-of-the-art performance for the Gaussian MAC. More realistic channel models such as single-antenna quasi-static Raleigh fading channel and MIMO channel were also considered in the literature. In this paper, we focus on the Gaussian MAC and refer the reader to \cite{FASURA2022, Andreev2022} for more details on the fading channels. 

We note that a large number of practical schemes (e.g. \cite{ordentlich2017low, vem2017user, Marshakov2019Polar, AmalladinneJoint2023}) utilize binary signaling (or BPSK modulation). At the same time, such schemes show worse energy efficiency in comparison to the schemes \cite{AmalladinnePolar2020, Duman2021, PradhanLDPCSC2021} using Gaussian codebooks. Thus, the natural question arises: Does binary signaling restrict energy efficiency? To the best of our knowledge, this question was not considered in the literature. In what follows, we aim to establish fundamental limits for binary-input Gaussian MAC. The original method from \cite{polyanskiy2017perspective} based on Gallager's $\rho$ trick does not work well for the binary-input Gaussian MAC. The reason is the need to calculate several consecutive expectations over false codewords, missed codewords, and noise, which seem to be a challenging problem for the binary signaling case. We were inspired by the approach proposed in \cite{Poor2022ML} to analyze the energy efficiency of the URA with the MIMO receiver. This method involves introducing a ``good'' region for estimating the per-user probability of error (PUPE), which allows us to calculate the probability of erroneous events inside this region and use the probability of getting outside this region as an estimate of the remaining probability of erroneous events. Our contribution is as follows. We start with a Gaussian codebook and obtain a new energy efficiency achievability bound for the URA over the Gaussian MAC. This bound is close to the bound from \cite{polyanskiy2017perspective} (the gap is no more than $0.06$ dB) but does not lead to any improvements. At the same time, the new method allows us to analyze binary signaling and derive achievability bound for the binary codebook. Numerical experiments show that two obtained bounds practically coincide. Therefore, we conclude that the URA with a binary codebook achieves almost the same energy efficiency as with a Gaussian codebook for the given set of system parameters, i.e. binary signaling does not impose any additional limitations.  

The paper is structured as follows. Section~\ref{sec:syst_model} describes the system model. Section~\ref{sec:bounds} presents the approach based on Fano's method, conditions for decoding error and region, and achievability bounds for the Gaussian URA channel with Gaussian and binary codebooks. In Section~\ref{sec:num_res} we show the numerical results for the bounds obtained in this paper and the Polyanskiy bound and analyze the results.

\textsl{Notation}: in this paper, we refer to the set of real numbers as $\mathbb{R}$. Scalar values are represented by non-boldface letters such as $x$ or $X$, while vectors are represented by boldface small letters such as $\dvec{x}$ (here we consider vectors as columns), and matrices are represented by boldface capital letters such as $\dmat{X}$. The identity matrix of size $n \times n$ is denoted by $\dmat{I}_n$. The Euclidean norm of a vector $\dvec{x}$ is denoted by $\norm{\dvec{x}}$. Calligraphic letters such as $\dset{X}$ are used to denote sets and ensembles. We reserve the letters $\mathcal{E}$ and $\mathcal{U}$ for events and use $\mathcal{E}^c$ to denote the complementary event to $\mathcal{E}$. Greek letters are used for real numbers, auxiliary coefficients, and functions, such as $\alpha$ and $\phi(\cdot)$. For any positive integer $n$, we use the notation $[n] = \{1,\dots,n\}$. We use the notation $\mathcal{N}\br{\bm{0},\dmat{I}_n}$ to represent a standard normal random vector. Lastly, we use the symbol $\mathbb{E}$ to denote the expectation operator, and $\prob{\mathcal{E}}$ to represent the probability of the event $\mathcal{E}$.

\section{System model}
\label{sec:syst_model}

This paper focuses on the partial activity model proposed in~\cite{polyanskiy2017perspective}. In this model, there are a total of $K_{tot}$ users, but only $K_{a}$ of them are active at any given time, where $K_{a} \ll K_{tot}$. We assume that the receiver knows the number of active users $K_{a}$ and that communication is done in a frame-synchronized fashion, where the length of each frame is $n$ channel uses. Each active user transmits $k$ bits within a frame, where $M = 2^{k}$. All users use the same message set $[M]$ and encoder function $f: [M] \to \mathbb{R}^n$, subject to the power constraint $\norm{f(W)}^2 \leq nP$ for all $W \in [M]$. Let $\dset{S} = \{W_1, \ldots, W_{K_a}\}$ be the set of transmitted messages. The channel output $\rvec{y} \in \mathbb{R}^n$ is given by
\begin{equation}
\label{eq:original_cs}
\rvec{y}=\sum_{i=1}^{K_a} f(W_i) +\rvec{z},
\end{equation}
where $\rvec{z} \sim \mathcal{N}\br{\bzero,\bI_{n}}$ is an additive white Gaussian noise (AWGN).

The goal of decoding is to output a set $\hat{\dset{S}} \subseteq [M]$ of size $K_{a}$, representing the set of transmitted messages up to permutation. The main performance measure used is the Per User Probability of Error (PUPE)
$$
P_e = \frac{1}{K_a} \sum\limits_{i=1}^{K_a} \prob{W_i \not\in \hat{\dset{S}}}.
$$

The focus of this paper is to minimize the energy-per-bit ($E_b/N_0 = Pn/k$, where $k=\log_2M$) spent by each user, subject to the restriction that $P_{e} \leq \epsilon$, for a given message set $[M]$, the length of the coded sequence $n$, and the number of active users $K_{a}$. The paper considers two ensembles of random-access codes that satisfy these conditions:
\begin{define}
Let $\Gens$ be the ensemble of Gaussian codebooks of size $n \times M$, where each element is sampled i.i.d. from $\mathcal{N}\br{0,P}$.
\end{define}
\begin{define}
Let $\Bens$ be the ensemble of binary codebooks of size $n \times M$, where each element is sampled i.i.d. from $\brc{-\sqrt{P}, \sqrt{P}}$ with probability $1/2$.
\end{define}

\section{Achievability bounds}
\label{sec:bounds}

In this section, we obtain the achievability bounds for the Gaussian URA channel with Gaussian and binary codebooks. First, we introduce some common notations and lemmas used in the corresponding proofs. Recall that $\dset{S}$ and $\hat{\dset{S}}$ are the sets of transmitted and received messages. Let $\dset{M} = \dset{S} \backslash \hat{\dset{S}}$, $\dset{F} = \hat{\dset{S}} \backslash \dset{S}$ and $\dset{C} = \dset{S} \bigcap \hat{\dset{S}}$ be the sets of missed, falsely-detected and correctly received messages, respectively. For $\dset{I} \subseteq [M]$ we introduce
$$
\mathbf{c}_{\dset{I}} = \sum\limits_{W \in \dset{I}}f\br{W}.
$$
In what follows, we assume that $t$ denotes the number of missed and false codewords. We note that $\card{\dset{M}} = \card{\dset{F}} = t$ since the receiver knows the number of active users $K_a$ and $\card{\hat{\dset{S}}} = K_{a}$.

Let $\sEt = \{|\dset{M}| = t \}$ denote the $t$-error event. Following \cite{polyanskiy2017perspective} we estimate $P_e$ in the following way
$$
P_e \leq \sum\limits_{t=1}^{K_{a}}\frac{t}{K_{a}}\prob{\sEt} + p_{0},
$$
where
\[
p_0 = \prob{\left[ \bigcup\limits_{i \ne j} \{ W_i = W_j \} \right] \cup \left[ \bigcup\limits_{i \in [K_a]} \{ \norm{f(W_i)}^2 > Pn \} \right]}
\]
is the probability to select the same codewords from the codebook by different users or to have the power of the signal more than $Pn$. Clearly, the second condition is needed for the Gaussian codebook only.

Let us consider the error event in more detail. In the case of maximum likelihood (ML) decoding, we get the decoding error if there exists the set of messages $\hat{\dset{S}}$ such that the sum of the corresponding codewords is closer to the received sequence $\rvec{y}$ than the sum of the codewords corresponding to the transmitted messages $\dset{S}$:
$$
\norms{\rvec{y} - \mathbf{c}_{\dset{S}}} \geq \norms{\rvec{y} - \mathbf{c}_{\hat{\dset{S}}}}.
$$
As a result, we get the following decoding error condition $\norms{\bz} \geq \norms{\bcm - \bcf + \bz}$. So, let us introduce the following error event:
$$
\sEe = \brc{\norms{\bz} - \norms{\bcm - \bcf + \bz} \geq 0}.
$$
Estimating the error probability based only on this condition we have to apply union bound on the last step. One knows it leads to the overestimation of the probability. A good method to reduce this overestimation is to use Fano's method. For this purpose, we introduce the ``good'' region (some restrictions on the noise norm) where we apply the above error condition:
$$
\norms{\bz} < \alpha\norms{\bcm + \bz} + \beta n.
$$
\begin{remark}
Let us provide some comments on the choice of ``good'' region:
\begin{itemize}
\item adding of $\bcc$ to the region description does not lead to performance improvement,
\item the region description does not contain $\bcf$ to avoid a huge binomial coefficient $\binom{M-K_a}{t}$.
\end{itemize}
\end{remark}
Thus, let us introduce the ``good'' region event:
$$
\sEr = \brc{\alpha\norms{\bcm + \bz} - \norms{\bz} + \beta n > 0}
$$
and the ``bad'' (complementary) one:
$$
\sEcr = \brc{\norms{\bz} - \alpha\norms{\bcm + \bz} - \beta n \geq 0}.
$$
Thus, we can introduce the following unions:
$$
\sUe = \bigcup\limits_{\dset{M}, \dset{F}: \card{\dset{M}} = \card{\dset{F}} = t}\sEe, 
$$
$$
\sUer = \bigcup\limits_{\dset{M}, \dset{F}: \card{\dset{M}} = \card{\dset{F}} = t}\sEe \cap \sEr,
$$
$$
\sUcr = \bigcup\limits_{\dset{M}: \card{\dset{M}} = t}\sEcr.
$$
In \cite{polyanskiy2017perspective}, the main result was obtained by estimating the $\prob{\sUe}$ using union bound with a combination of several of Gallager's $\rho$-tricks. In this paper, we estimate the error probability in the following way:
$$
\prob{\sUe} \leq \inf\limits_{\alpha > 0, \beta > 0} \br{ \prob{\sUer} + \prob{\sUcr} },
$$
where we apply union bound to estimate $\prob{\sUer}$ and $\prob{\sUcr}$
$$
\prob{\sUer} \leq \binom{M-K_a}{t}\binom{K_{a}}{t}\prob{\sEe \cap \sEr},
$$
$$
\prob{\sUcr} \leq \binom{K_{a}}{t}\prob{\sEcr}.
$$
To find good estimates for $\prob{\sEe \cap \sEr}$ and $\prob{\sEcr}$ we use the Chernoff bound \cite{cover2012elements}  and the Theorem 3.2a.1 from \cite{mathai1992quadratic}, which we formulate as follows
\begin{lemma}[Chernoff bound, \cite{cover2012elements}]\label{lemma:chernoff}
Let $\chi_1$ and $\chi_2$ be any random variables, then for any $u,v > 0$ the following bounds hold:
\[
\prob{\chi_1 \geq 0} \leq \mathbb{E}_{\chi_1} \left[ \exp\left( u \chi_1 \right) \right]
\]
and
\[
\prob{\chi_1 \geq 0, \chi_2 \geq 0} \leq \mathbb{E}_{\chi_1, \chi_2} \left[ \exp\left( u \chi_1 + v \chi_2 \right) \right].
\]
\end{lemma}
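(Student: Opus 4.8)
The plan is to establish both inequalities by the classical Chernoff argument: rewrite each probability as the expectation of an indicator function and dominate that indicator pointwise by a suitable exponential.

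For the first bound, I would start from $\prob{\chi_1 \geq 0} = \mathbb{E}_{\chi_1}\left[\mathbf{1}\{\chi_1 \geq 0\}\right]$. The crucial elementary fact is that, for every fixed $u > 0$, one has $e^{ux} \geq \mathbf{1}\{x \geq 0\}$ for all $x \in \mathbb{R}$: if $x \geq 0$ then $e^{ux} \geq e^{0} = 1$, while if $x < 0$ then $e^{ux} > 0$ and the indicator is zero, so the exponential dominates on the whole real line. Substituting $x = \chi_1$ and using monotonicity of expectation on the pointwise inequality $\mathbf{1}\{\chi_1 \geq 0\} \leq e^{u\chi_1}$ gives $\prob{\chi_1 \geq 0} \leq \mathbb{E}_{\chi_1}\left[e^{u\chi_1}\right]$. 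Equivalently, this is Markov's inequality applied to the nonnegative random variable $e^{u\chi_1}$, since $\{\chi_1 \geq 0\} = \{e^{u\chi_1} \geq 1\}$.

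For the joint bound, the same idea extends directly. Writing $\prob{\chi_1 \geq 0, \chi_2 \geq 0} = \mathbb{E}_{\chi_1,\chi_2}\left[\mathbf{1}\{\chi_1 \geq 0\}\,\mathbf{1}\{\chi_2 \geq 0\}\right]$ and applying the two pointwise bounds $\mathbf{1}\{\chi_1 \geq 0\} \leq e^{u\chi_1}$ and $\mathbf{1}\{\chi_2 \geq 0\} \leq e^{v\chi_2}$, which are valid since $u, v > 0$, the product of these nonnegative quantities satisfies $\mathbf{1}\{\chi_1 \geq 0\}\,\mathbf{1}\{\chi_2 \geq 0\} \leq e^{u\chi_1 + v\chi_2}$ pointwise. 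Taking expectations yields $\prob{\chi_1 \geq 0, \chi_2 \geq 0} \leq \mathbb{E}_{\chi_1,\chi_2}\left[e^{u\chi_1 + v\chi_2}\right]$, as claimed.

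There is essentially no obstacle here: the statement is elementary and self-contained, and $\chi_1,\chi_2$ need not be independent. The only point deserving a moment's attention is that the exponential must dominate the indicator over the entire support of each $\chi_i$, including the region where $\chi_i < 0$; this is automatic because $e^{u\chi_i} > 0$ there. No finiteness assumption on the moment generating functions is required, since if the right-hand side equals $+\infty$ the bound holds trivially.
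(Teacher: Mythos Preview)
Your proof is correct and is precisely the standard Chernoff/Markov argument. The paper does not supply its own proof of this lemma---it is stated with a citation to \cite{cover2012elements}---so there is nothing to compare beyond noting that your argument is the expected textbook one.
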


\begin{lemma}[Theorem 3.2a.1, \cite{mathai1992quadratic}]\label{lem:quad_form_rv}
Let $\be \sim \mathcal{N}(\bzero, \bI_n)$, $\mathbf{b}$ be an arbitrary vector of length $n$, $\bA$ be a symmetric matrix of size $n \times n$ and $\lambda_{min}(\bI_{n} - 2\bA) > 0$ then
\begin{flalign*}
&\mathbb{E}_{\be}\exp\brs{ \be^{T}\bA\be + \mathbf{b}^{T}\be} = \nonumber \\
&= \exp\brs{-\frac{1}{2}\log\det\br{\bI_{n} - 2\bA} + \frac{1}{2}\mathbf{b}^{T}\br{\bI_{n}-2\bA}^{-1}\mathbf{b}}. \nonumber
\end{flalign*}
\end{lemma}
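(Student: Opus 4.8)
The plan is to evaluate the expectation directly as a Gaussian integral and then to recognize what remains as the normalization constant of a shifted multivariate Gaussian. Writing out the density of $\be \sim \mathcal{N}(\bzero,\bI_n)$ and merging the two quadratic terms using the symmetry of $\bA$, we have
$$
\mathbb{E}_{\be}\exp\brs{\be^{T}\bA\be + \mathbf{b}^{T}\be} = \br{2\pi}^{-n/2}\int_{\mathbb{R}^n}\exp\brs{-\tfrac12\be^{T}\br{\bI_n - 2\bA}\be + \mathbf{b}^{T}\be}\,d\be .
$$
Write $\mathbf{M} = \bI_n - 2\bA$. By hypothesis $\lambda_{min}(\mathbf{M}) > 0$, so $\mathbf{M}$ is symmetric positive definite, hence invertible; this is exactly the condition that makes the integral finite.

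Next I would complete the square in the exponent,
$$
-\tfrac12\be^{T}\mathbf{M}\be + \mathbf{b}^{T}\be = -\tfrac12\br{\be - \mathbf{M}^{-1}\mathbf{b}}^{T}\mathbf{M}\br{\be - \mathbf{M}^{-1}\mathbf{b}} + \tfrac12\mathbf{b}^{T}\mathbf{M}^{-1}\mathbf{b},
$$
pull the $\be$-independent term out of the integral, and translate by $\mathbf{w} = \be - \mathbf{M}^{-1}\mathbf{b}$ (harmless, since integration is over all of $\mathbb{R}^n$). What is left is $\br{2\pi}^{-n/2}\int_{\mathbb{R}^n}\exp\brs{-\tfrac12\mathbf{w}^{T}\mathbf{M}\mathbf{w}}\,d\mathbf{w}$, the standard multivariate Gaussian normalization, which equals $\br{\det\mathbf{M}}^{-1/2} = \exp\brs{-\tfrac12\log\det\br{\bI_n - 2\bA}}$. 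Multiplying this by $\exp\brs{\tfrac12\mathbf{b}^{T}\mathbf{M}^{-1}\mathbf{b}}$ yields the claimed identity.

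An equivalent, more elementary route is to diagonalize $\bA = \mathbf{Q}^{T}\mathbf{D}\mathbf{Q}$ with $\mathbf{Q}$ orthogonal and $\mathbf{D} = \mathrm{diag}(d_1,\dots,d_n)$, substitute $\bu = \mathbf{Q}\be$ (again $\mathcal{N}(\bzero,\bI_n)$ by orthogonal invariance), and observe that the expectation factorizes into one-dimensional integrals $\mathbb{E}\brs{\exp\br{d_i u_i^{2} + c_i u_i}}$ with $\mathbf{c} = \mathbf{Q}\mathbf{b}$; each equals $\br{1-2d_i}^{-1/2}\exp\br{\tfrac{c_i^{2}}{2(1-2d_i)}}$ whenever $1-2d_i>0$, and the product reassembles $\prod_i\br{1-2d_i} = \det\br{\bI_n-2\bA}$ in the determinant term and $\sum_i c_i^{2}/(1-2d_i) = \mathbf{b}^{T}\br{\bI_n-2\bA}^{-1}\mathbf{b}$ in the exponent. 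There is no genuine obstacle here; the only delicate point is the eigenvalue hypothesis $\lambda_{min}(\bI_n - 2\bA) > 0$, which is precisely what guarantees convergence of the integral (equivalently, $1-2d_i>0$ for every $i$) and hence the legitimacy of completing the square and of evaluating the one-dimensional Gaussian moment generating functions.
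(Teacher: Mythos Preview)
Your argument is correct and is the standard derivation of this identity: complete the square in the exponent after absorbing the Gaussian density, use the positive-definiteness hypothesis on $\bI_n-2\bA$ to guarantee convergence, and identify the remaining integral as the normalization constant of a Gaussian with covariance $(\bI_n-2\bA)^{-1}$; the diagonalization variant you sketch is equally valid and leads to the same formula coordinatewise.

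There is nothing to compare here, though: the paper does not prove this lemma but simply quotes it as Theorem~3.2a.1 from the cited reference on quadratic forms. Your write-up is a self-contained proof of a result the paper treats as a black box, so it goes beyond what the paper itself supplies rather than taking a different route to the same destination.
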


\subsection{Gaussian codebook}
\begin{theorem}
\label{th:gauss_codebook}
Fix $P' < P$. There exists a random-access code with Gaussian codebook from $\mathcal{G}\br{M,n,P'}$ for $K_a$-user Gaussian URA channel satisfying power-constraint $P$ and having
$$
P_e \leq \sum\limits_{t=1}^{K_{a}}\frac{t}{K_{a}}\inf\limits_{\alpha, \beta > 0}\left[ \binom{M - K_{a}}{t}\binom{K_{a}}{t}p_{t} + \binom{K_{a}}{t}q_{t} \right] + p_{0},
$$
where
$$
p_{0} = \frac{\binom{K_{a}}{2}}{M} + K_{a}\prob{\frac{1}{n}\norms{\bz} > \frac{P^{'}}{P}},
$$
$$
p_{t} = \inf\limits_{u,v > 0}\exp\brs{-\frac{n}{2}\log\det\br{\bI_{3} - 2\bA} + v\beta n},
$$
$$
q_{t} = \inf\limits_{\delta > 0}\exp\brs{-\frac{n}{2}\log\det\br{\bI_{2} - 2\delta\bB} - \delta\beta n},
$$
where
$$
\bA = \br{\begin{array}{rrr}
\br{\alpha - 1}v & \br{\alpha v - u}\sqrt{P't} & u\sqrt{P't} \\
\br{\alpha v - u}\sqrt{P't} & \br{\alpha v - u} P't & uP't \\
u\sqrt{P't} & uP't & -uP't \\
\end{array}},
$$
$$
\bB = \br{\begin{array}{rr}
1 - \alpha & -\alpha\sqrt{P't} \\
-\alpha\sqrt{P't} & -\alpha P't \\
\end{array}},
$$
and the following conditions hold
$$
\lambda_{min}\br{\bI_{3} - 2bA} > 0\text{ and }\lambda_{min}\br{\bI_{2} - 2\delta\bB} > 0.
$$
\end{theorem}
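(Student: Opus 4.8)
The plan is to run exactly the chain of reductions set up before the statement. Starting from $P_e\le\sum_{t=1}^{K_a}\frac{t}{K_a}\prob{\sEt}+p_0$, I would first dispose of $p_0$, then for each $t$ pass from $\prob{\sEt}$ to $\prob{\sUe}$, split the latter through the ``good'' region as $\prob{\sUe}\le\prob{\sUer}+\prob{\sUcr}$, apply the union bounds carrying the binomial coefficients, and finally bound the two single-configuration probabilities $\prob{\sEe\cap\sEr}$ and $\prob{\sEcr}$ by $p_t$ and $q_t$ respectively. All probabilities are taken over an i.i.d.\ draw of the codebook from $\mathcal{G}(M,n,P')$ together with the noise $\bz$ (and the uniform messages); the concluding existence claim is the standard one that a deterministic code attains the ensemble average, after replacing every codeword of norm exceeding $nP$ by $\bzero$ --- a modification that forces the power constraint $P$ and whose cost is exactly the second summand of $p_0$.

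For $p_0$: two of the $K_a$ independent uniform messages coincide with probability at most $\binom{K_a}{2}/M$ by a union bound, which gives the first term; and since every codeword $f(W)$ of $\mathcal{G}(M,n,P')$ has $\norms{f(W)}$ equal in law to $P'$ times a $\chi^2_n$ variable, a union bound over the $K_a$ users bounds the power-violation probability by $K_a$ times the corresponding chi-square tail, the second term of $p_0$. On the complement of these events the code has $K_a$ distinct, feasible codewords, and ML decoding applies: if it errs with $\card{\dset M}=t$, the decoded set differs from $\dset S$ in exactly $t$ positions and the ML inequality $\norms{\bz}\ge\norms{\bcm-\bcf+\bz}$ holds for that very pair $(\dset M,\dset F)$, so that the $t$-error event is contained in $\sUe$.

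Since $\sEcr$ is the complement of $\sEr$, one has $\sEe\subseteq(\sEe\cap\sEr)\cup\sEcr$ for each configuration; taking the union over all $(\dset M,\dset F)$ with $\card{\dset M}=\card{\dset F}=t$ and using that $\sEcr$ depends on $\dset M$ only yields $\sUe\subseteq\sUer\cup\sUcr$, hence $\prob{\sUe}\le\prob{\sUer}+\prob{\sUcr}$ for every $\alpha,\beta>0$. A union bound over the $\binom{K_a}{t}$ choices of $\dset M$ among the transmitted messages and the $\binom{M-K_a}{t}$ choices of $\dset F$ among the remaining ones, together with the exchangeability of the i.i.d.\ ensemble (so that the per-configuration probabilities do not depend on the configuration), produces the two stated sums; taking $\inf_{\alpha,\beta>0}$ completes the reduction.

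What remains --- and the only genuinely computational part --- is to evaluate $\prob{\sEe\cap\sEr}$ and $\prob{\sEcr}$. The enabling fact is that, under the i.i.d.\ ensemble, $\bcm$ and $\bcf$ are each a sum of $t$ independent $\mathcal{N}(\bzero,P'\bI_n)$ vectors, mutually independent and independent of $\bz$; write $\bcm=\sqrt{P't}\,\bu$ and $\bcf=\sqrt{P't}\,\bu'$ with $\bu,\bu',\bz$ i.i.d.\ $\mathcal{N}(\bzero,\bI_n)$ (note that $\bcc$ never enters: it cancels in $\sEe$ and was deliberately excluded from $\sEr$). For $q_t$, apply the single-variable Chernoff bound of Lemma~\ref{lemma:chernoff} with parameter $\delta>0$ to $\sEcr$; expanding $\norms{\bz}-\alpha\norms{\bcm+\bz}$ shows that, apart from the deterministic term $-\delta\beta n$, the exponent decouples across the $n$ coordinates into $n$ identical $2\times2$ quadratic forms in $(z_i,u_i)$ with matrix $\bB$, so Lemma~\ref{lem:quad_form_rv} with $\mathbf{b}=\bzero$ evaluates each factor, the $n$-th power produces the $-\frac{n}{2}\log\det(\bI_2-2\delta\bB)$ term, and the hypothesis of that lemma is precisely $\lambda_{min}(\bI_2-2\delta\bB)>0$. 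For $p_t$, apply the two-variable Chernoff bound with parameters $u,v>0$ to $\chi_1=\norms{\bz}-\norms{\bcm-\bcf+\bz}$ and $\chi_2=\alpha\norms{\bcm+\bz}-\norms{\bz}+\beta n$; collecting the quadratic and bilinear coefficients of $u\chi_1+v\chi_2$ in $(z_i,u_i,u'_i)$ reproduces exactly the $3\times3$ matrix $\bA$ of the statement, the deterministic part contributing $v\beta n$, and Lemma~\ref{lem:quad_form_rv} applied coordinatewise together with the $n$-th power gives $p_t$ under $\lambda_{min}(\bI_3-2\bA)>0$. The main obstacle is the bookkeeping in this last step: keeping the independence of $\bcm$, $\bcf$, $\bz$ straight, expanding $u\chi_1+v\chi_2$ without sign errors, checking that it collapses to precisely $\bA$ (and $\bB$), and verifying that the positivity constraints on $\alpha,\beta,u,v,\delta$ together with the two eigenvalue conditions can be met simultaneously so that the infima are over nonempty sets.
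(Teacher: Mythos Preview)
Your proposal is correct and follows essentially the same approach as the paper: Chernoff bounds from Lemma~\ref{lemma:chernoff} combined with the Gaussian quadratic-form moment-generating function of Lemma~\ref{lem:quad_form_rv}, exploiting the independence of $\bz$, $\bcm$, $\bcf$ and the i.i.d.\ coordinate structure. The only cosmetic difference is that the paper stacks $(\bz,\bcm/\sqrt{P't},\bcf/\sqrt{P't})$ into a single $3n$-dimensional standard normal vector, applies Lemma~\ref{lem:quad_form_rv} once to the full $3n\times 3n$ block matrix, and then reduces via the Kronecker identity $\bA_n=\bI_n\otimes\bA$, whereas you decouple coordinatewise from the start and apply the lemma $n$ times to the $3\times 3$ (resp.\ $2\times 2$) form directly --- the two routes are equivalent.
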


\begin{proof}
Consider the ensemble $\mathcal{G}\br{M,n,P'}$. Let us introduce the following vector
$$\be^{T} = \br{\bz^{T},\, \bcm^{T}/\sqrt{P't},\, \bcf^{T}/\sqrt{P't}}
$$
$\be \sim \mathcal{N}(\bzero, \bI_{3n})$. Using this notation we can rewrite the events $\sEe$, $\sEr$ and $\sEcr$ in the following way:
$$
\sEe = \brc{\be^{T}\bA_{e}\be \geq 0},
$$
$$
\sEr = \brc{\be^{T}\bA_{r}\be + \beta n > 0},
$$
$$
\sEcr = \brc{\be^{T}\bA^{c}_{r}\be - \beta n \geq 0},
$$
where $\bA^{c}_{r} = -\bA_{r}$ and
$$
\bA_{e} = \br{\begin{array}{rrr}
     \bzero & -\sqrt{P't}\bI_{n} & \sqrt{P't}\bI_{n}  \\
     -\sqrt{P't}\bI_{n} & -P't\bI_{n} & P't\bI_{n} \\
     \sqrt{P't}\bI_{n} & P't\bI_{n} & -P't\bI_{n}
\end{array}},
$$
$$
\bA_{r} = \br{\begin{array}{rrr}
     \br{\alpha - 1}\bI_{n} & \alpha\sqrt{P't}\bI_{n} & \bzero  \\
     \alpha\sqrt{P't}\bI_{n} & \alpha P't\bI_{n} & \bzero \\
     \bzero & \bzero & \bzero 
\end{array}}.
$$

Now, we can obtain the estimation for the probability $\prob{\sEe \cap \sEr}$ using Chernoff bound for the joint events:
$$
\prob{\sEe \cap \sEr} \leq \inf\limits_{u,v > 0}\mathbb{E}_{\be}\exp\brs{\be^{T} \bA_{n} \be + v\beta n },
$$
where $\bA_{n} = u\bA_{e} + v\bA_{r}$, and using the Lemma~\ref{lem:quad_form_rv} (with $\mathbf{b}$ equal to zero vector) we obtain the expectation $\mathbb{E}_{\be}$:
$$
\prob{\sEe \cap \sEr} \leq \inf\limits_{u,v > 0}\exp\brs{-\frac{1}{2}\log\det\br{\bI_{3n} - 2\bA_{n}} + v\beta n}
$$
for $\lambda_{min}\br{\bI_{3n} - 2\bA_{n}} > 0$.

Let us note, that the matrix $\bA_{n}$ can be represented as $\bI_{n} \otimes \bA$ by reordering its columns and rows. So, we can represent $\br{\bI_{3n} - 2\bA_{n}}$ as $\bI_{n} \otimes \br{\bI_{3} - 2\bA}$ and it is clear that $-\frac{1}{2}\log\det\br{\bI_{3n} - 2\bA_{n}} = -\frac{n}{2}\log\det\br{\bI_{3} - 2\bA}$. Thus, $\prob{\sEe \cap \sEr} \leq p_{t}$.

Similarly, we have $\prob{\sEcr} \leq q_{t}$ for $\lambda_{min}\br{\bI_{2} - 2\delta \bB} > 0$. It is true due to the Laplace (cofactor) expansion over the last block-column (block-row) of matrix $\br{\bI_{3n} - 2\delta \bA^{c}_{r}}$ and row and column reordering, that we use above.

Finally, applying union bound we get the estimate for $P_e$ with $p_0$ defined earlier.
\end{proof}

\begin{remark}
We note that one can consider regions including $\bcf$ with the use of the ideas from \cite{ZPT-isit19}, i.e. let us consider the following region
\[
\sEr = \{ \norm{\mathbf{c}_{\dset{F}}} \geq \sqrt{\beta n}  \},
\]
we have
\begin{eqnarray*}
\Pr\left[ \sUcr \right] &=& \Pr\left[ \bigcup_{\dset{F} \subset [M]\backslash[K_a]}\{ \norm{\mathbf{c}_{\dset{F}}} < \sqrt{\beta n}  \} \right] \\ &=& \Pr\left[ \min_{\dset{F} \subset [M]\backslash[K_a]} \norm{\mathbf{c}_{\dset{F}}} < \sqrt{\beta n} \right].
\end{eqnarray*}
The latter term can be upper bounded with the use of concentration properties and Gordon's inequality \cite{gordon1988} for the expectation of the minimum of the Gaussian process. Thus, the techniques from \cite{ZPT-isit19} can be reformulated as an instance of Fano's method. At the same time, we note that we were not able to get any improvement for the finite-length regime.
\end{remark}

\subsection{Binary codebook}
\begin{theorem}
\label{th:binary_codebook}
There exists a random-access code with binary codebook from $\Bens$ for $K_a$-user Gaussian URA channel satisfying power-constraint $P$ and having
$$
P_e \leq \sum\limits_{t=1}^{K_{a}}\frac{t}{K_{a}}\inf\limits_{\alpha, \beta > 0}\left[ \binom{M-K_{a}}{t}\binom{K_{a}}{t}p_{t} + \binom{K_{a}}{t}q_{t}  \right] + p_0,
$$
where $p_0 = \frac{\binom{K_a}{2}}{M}$ and
\begin{flalign*}
&p_t = \inf\limits_{u,v > 0}\left[ \exp\brs{-n\zeta\br{\alpha, \beta, v}} \times \begin{array}{cc}
&  \\
&  \\
&
\end{array} \right. \nonumber \\
&\times \left. \br{\sum\limits_{m=-t}^{t}\sum\limits_{f=-t}^{t} \rho_{m}\rho_{f}\exp{\brs{P\phi\br{\alpha, u, v, m, f}}}}^{n} \right] \nonumber
\end{flalign*}
where
$$
\zeta\br{\cdot} = \frac{1}{2}\log\br{1 - 2v\br{\alpha - 1}} - \beta v,
$$
$$
\phi\br{\cdot} = \br{2\frac{\br{\alpha v m - u\br{m-f}}^2}{1 - 2v\br{\alpha - 1}} + \alpha v m^{2} - u\br{m-f}^2},
$$
and
\begin{flalign*}
& q_{t} = \inf\limits_{\delta > 0}\left[ \exp\brs{-n\xi\br{\alpha, \beta, \delta}} \times \begin{array}{cc}
&  \\
&  \\
&
\end{array} \right. \nonumber \\
&\times \left. \br{\sum\limits_{m = -t}^{t} \rho_{m}\exp\brs{P\psi\br{\alpha, \delta, m}}}^{n} \right], \nonumber
\end{flalign*}
where
$$
\xi\br{\cdot} = \frac{1}{2}\log\br{1 - 2\delta\br{1 - \alpha}} + \delta\beta,
$$
$$
\psi\br{\cdot} = \delta\alpha\frac{2\delta - 1}{1 - 2\delta\br{1 - \alpha}}m^{2},
$$
and where
$$
\rho_{i} =
\begin{cases}
\binom{t}{\frac{1}{2}\br{i+t}}2^{-t}, & i \in \brc{2j-t,\,\forall j: 0 \leq j \leq t} \\
0, & \textit{otherwise}
\end{cases}
$$
and the following conditions hold
$$
1 - 2v\br{\alpha - 1} > 0 \text{ and } 1-2\delta\br{1 - \alpha} > 0.
$$
\end{theorem}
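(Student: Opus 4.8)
The plan is to reuse the scaffolding of the proof of Theorem~\ref{th:gauss_codebook} and change only the step where the relevant expectations are evaluated. As before, I will split $P_e \le \sum_{t=1}^{K_a}\frac{t}{K_a}\prob{\sUe}+p_0$, bound $\prob{\sUe}\le\inf_{\alpha,\beta>0}\br{\prob{\sUer}+\prob{\sUcr}}$, and then apply the union bounds $\prob{\sUer}\le\binom{M-K_a}{t}\binom{K_a}{t}\prob{\sEe\cap\sEr}$ and $\prob{\sUcr}\le\binom{K_a}{t}\prob{\sEcr}$. Since every codeword of a binary codebook has $\norms{f(W)}^2=nP$ deterministically, there is no power-violation event, and $p_0$ collapses to the collision probability $\binom{K_a}{2}/M$. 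It then remains to produce the estimates $\prob{\sEe\cap\sEr}\le p_t$ and $\prob{\sEcr}\le q_t$; this is where the argument departs from the Gaussian case, because $\bcm$ and $\bcf$ are no longer Gaussian and Lemma~\ref{lem:quad_form_rv} cannot be invoked on the whole $3n$-dimensional vector.

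Instead I will exploit that the channel and the codebook act coordinatewise on i.i.d.\ entries. Fix a pair $(\dset M,\dset F)$ with $\card{\dset M}=\card{\dset F}=t$. For a generic coordinate $j$ write $c_{\dset M,j}=\sqrt P\,m$ and $c_{\dset F,j}=\sqrt P\,f$, where $m$ and $f$ are independent, each a sum of $t$ i.i.d.\ Rademacher $\pm1$ variables (hence $m\in\brc{-t,\dots,t}$ with $\prob{m=i}=\rho_i$, and likewise for $f$), and both are independent of $z_j\sim\mathcal{N}(0,1)$; moreover all of $(m,f,z_j)$ are independent across $j$. Starting from the Chernoff bounds of Lemma~\ref{lemma:chernoff}, namely $\prob{\sEe\cap\sEr}\le\inf_{u,v>0}\mathbb{E}\exp\brs{u\chi_e+v\chi_r}$ with $\chi_e=\norms{\bz}-\norms{\bcm-\bcf+\bz}$ and $\chi_r=\alpha\norms{\bcm+\bz}-\norms{\bz}+\beta n$, and $\prob{\sEcr}\le\inf_{\delta>0}\mathbb{E}\exp\brs{\delta\br{\norms{\bz}-\alpha\norms{\bcm+\bz}-\beta n}}$, I expand the squared norms so that each exponent splits into a deterministic $\beta$-term ($v\beta n$ or $-\delta\beta n$) plus a sum over $j$ of the single-coordinate quadratic $(\alpha-1)v\,z_j^2+2\sqrt P\brs{\alpha v m-u(m-f)}z_j+P\brs{\alpha v m^2-u(m-f)^2}$ (respectively $\delta(1-\alpha)\,z_j^2-2\delta\alpha\sqrt P\,m\,z_j-\delta\alpha P m^2$). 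By independence across coordinates each expectation becomes the $n$-th power of a single-coordinate expectation.

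To evaluate that single-coordinate expectation I first condition on $(m,f)$ and integrate out $z_j$ with Lemma~\ref{lem:quad_form_rv} in dimension one: the $z_j^2$-coefficient yields the determinant term $-\tfrac12\log\br{1-2v(\alpha-1)}$ (resp.\ $-\tfrac12\log\br{1-2\delta(1-\alpha)}$), and its positivity requirement $\lambda_{min}(\bI-2A)>0$ becomes exactly the scalar conditions $1-2v(\alpha-1)>0$ and $1-2\delta(1-\alpha)>0$; the $z_j$-coefficient contributes the completed-square term $2\brs{\alpha v m-u(m-f)}^2/\br{1-2v(\alpha-1)}$ (resp.\ $2\delta^2\alpha^2 m^2/\br{1-2\delta(1-\alpha)}$), which after adding the constant term gives precisely $P\phi(\alpha,u,v,m,f)$ (resp., using $\tfrac{2\delta\alpha}{1-2\delta(1-\alpha)}-1=\tfrac{2\delta-1}{1-2\delta(1-\alpha)}$, precisely $P\psi(\alpha,\delta,m)$). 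Averaging over $(m,f)$ with weights $\rho_m\rho_f$ (resp.\ over $m$ with weights $\rho_m$) produces the bracketed finite sums; raising to the $n$-th power and restoring the deterministic $\beta$-terms as $\exp\brs{-n\zeta(\alpha,\beta,v)}$ and $\exp\brs{-n\xi(\alpha,\beta,\delta)}$ gives $\prob{\sEe\cap\sEr}\le p_t$ and $\prob{\sEcr}\le q_t$. Substituting these into the union bounds above yields the claimed expression for $P_e$.

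I do not anticipate a conceptual obstacle: the only delicate part is the bookkeeping that turns the one-dimensional Gaussian integral together with the Rademacher average into the exact closed forms $\zeta,\xi,\phi,\psi$ of the statement — in particular keeping in mind that, in contrast with Theorem~\ref{th:gauss_codebook}, the average over the codeword sums cannot be folded into a determinant and must be left as the explicit sum weighted by the binomial law $\rho_i$, which is precisely why $p_t$ and $q_t$ carry the inner summations over $m$ and $f$.
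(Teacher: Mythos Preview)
Your proposal is correct and follows essentially the same approach as the paper: Chernoff bound on $\sEe\cap\sEr$ and $\sEcr$, Gaussian integration of the noise via Lemma~\ref{lem:quad_form_rv}, and then the i.i.d.\ coordinatewise factorization with the binomial weights $\rho_i$ for the Rademacher sums. The only cosmetic difference is ordering: the paper first integrates the full $n$-dimensional noise vector and then factors the remaining expectation over $\bcm,\bcf$ into a product over coordinates, whereas you factor over coordinates first and apply Lemma~\ref{lem:quad_form_rv} in dimension one; by Fubini and the identical per-coordinate structure these are the same computation.
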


\begin{proof}
Opening the brackets of the norms, let us rewrite the events $\sEe$, $\sEr$ and $\sEcr$ in the following way:
$$
\sEe = \brc{-2\br{\bcm - \bcf}^{T}\bz - \norms{\bcm - \bcf} > 0},
$$
$$
\sEr = \brc{\br{\alpha - 1}\norms{\bz} + 2\alpha \bcm^T\bz + \alpha\norms{\bcm} + \beta n > 0},
$$
$$
\sEcr = \brc{\br{1 - \alpha}\norms{\bz} - 2\alpha \bcm^T\bz - \alpha\norms{\bcm} - \beta n > 0}.
$$
Thus, we can estimate the probability $\prob{\sEe \cap \sEr}$ using Chernoff bound for the joint events in the following way:
\begin{flalign*}
&\prob{\sEe \cap \sEr} \leq \inf\limits_{u,v > 0}\mathbb{E}_{\bz,\bcm,\bcf}\exp\left[\br{\alpha - 1}v\norms{\bz} \right. \nonumber \\ 
&\left. + 2\br{\alpha v \bcm^T - u\br{\bcm -\bcf}^T}\bz \right. \nonumber \\ 
&\left. + \alpha v \norms{\bcm} - u\norms{\bcm - \bcf} + \beta v n \right].
\end{flalign*}
First, let us obtain the expectation value over $\bz$ considering $\bcm$ and $\bcf$ fixed. For this purpose we apply the Lemma~\ref{lem:quad_form_rv}:
\begin{flalign*}
&\prob{\sEe \cap \sEr} \leq \inf\limits_{u,v > 0}\mathbb{E}_{\bcm, \bcf}\exp\left[-\frac{n}{2}\log\br{1 - 2\br{\alpha - 1}v} \right. \nonumber \\ 
&\left. + 2\frac{\norms{\alpha v \bcm - u\br{\bcm - \bcf}}}{1 - 2\br{\alpha - 1}v} \right. \nonumber \\
&\left. + \alpha v \norms{\bcm} - u\norms{\bcm - \bcf} + \beta v n \right]
\end{flalign*}
To obtain the expectation value over $\bcm$ and $\bcf$, let us consider the $\bcm = \br{c_{\dset{M},1}, c_{\dset{M},2}, \dots, c_{\dset{M},n}}$ (for $\bcf$ the following arguments are similar), that is the sum of $t$ codewords (with the power constraint $P$ for each coordinate). Then, $c_{\dset{M},l} \in \brc{\br{2j-t}\sqrt{P}, \forall j: 0 \leq j \leq t}$, $\forall l \in \brs{n}$, because changing the sign of $j$ terms change the sum by $2j\sqrt{P}$ in absolute value, e.g. if all of $t$ terms have the value $-\sqrt{P}$ and we select random $j$ terms to change their value to $\sqrt{P}$, then the sum will be $\br{2j-t}\sqrt{P}$. So, the number of ways to obtain the sum $c_{\dset{M},l} = \br{2j - t}\sqrt{P}$, $\forall j: 0 \leq j \leq t$, is $\binom{t}{j}$ and overall number of possible combinations of $t$ terms with two possible values for each is equal to $2^{t}$. Thus, we can introduce the probability of $c_{\dset{M},l}$ equals to $i\sqrt{P}$ as
$$
\rho_{i} = \prob{c_{\dset{M},l} = i\sqrt{P}} = \frac{1}{2^{t}}\binom{t}{\frac{i+t}{2}}
$$
for $i \in \brc{2j-t, \forall j: 0 \leq j \leq t}$ and $0$ otherwise. Now, let us consider the term 
$$
\exp\brs{\norms{\bcm}} = \exp\brs{\sum\limits_{l=1}^{n}c^{2}_{\dset{M},l}} = \prod\limits_{l = 1}^n\exp\brs{c^{2}_{\dset{M},l}}.
$$
Let $c_{\dset{M},l} = i_{l}\sqrt{P}$, then the probability of such vector $\bcm$ is equal to $\prob{\norms{\bcm} = \sum\limits_{l=1}^{n}i^{2}_{l}P} = \prod\limits_{l=1}^{n}\rho_{i_{l}}$. Thus,
\begin{flalign*}
&\mathbb{E}\brs{\exp\brs{\norms{\bcm}}} = \sum\limits_{\norms{\bcm}}\prob{\norms{\bcm}}\exp\brs{\norms{\bcm}} \nonumber \\
& = \sum\limits_{\br{i_{1}, i_{2}, \dots, i_{n}}}\prod\limits_{l=1}^{n}\rho_{i_{l}}\exp\brs{i^{2}_{l}P} = \br{\sum\limits_{i=-t}^{t}\rho_{i}\exp\brs{i^{2}P}}^{n}.
\end{flalign*}
So, applying this estimate over $\bcm$ and $\bcf$ we get $\prob{\sEe \cap \sEr} \leq p_{t}$. Similarly, we have $\prob{\sEcr} \leq q_{t}$.
\end{proof}

\section{Numerical results}
\label{sec:num_res}
This section presents numerical results for the achievable bounds derived in the previous section for the Gaussian URA channel with Gaussian and binary codebooks. We assume a communication system with a frame length of $n = 30000$ channel usages, where each user transmits $k = 100$ bits. Fig.\ref{fig:ebno_ka} shows the energy efficiency plotted against the number of active users $K_a$ for different schemes. Energy efficiency is defined as the minimum $E_b/N_0$ such that the probability of error per user is $P_e < 0.05$. As can be seen in Fig.\ref{fig:ebno_ka}, the results obtained for Gaussian (Theorem\ref{th:gauss_codebook}) and binary (Theorem~\ref{th:binary_codebook}) codebooks are almost identical and very close to the original achievability bound from \cite{polyanskiy2017perspective} (e. g. for $250$ active users the values are $1.210$dB, $1.211$dB and $1.154$dB, respectively). Now let us consider the simulation results for the energy efficiency of some practical schemes from \cite{AmalladinneJoint2023, AmalladinnePolar2020, PradhanLDPCSC2021}. We must note that the schemes from \cite{AmalladinnePolar2020, PradhanLDPCSC2021} use the Gaussian codebooks even though they are based on binary polar and LDPC codes, respectively, because they use spreading sequences based on Gaussian signaling for transmission. In contrast, the scheme of~\cite{AmalladinneJoint2023} uses binary codebooks and obtains results for a binary input AWGN channel with BPSK modulation. As you can see, the Gaussian codebooks for practical schemes provide much better energy efficiency than binary ones. However, our achievability bound shows that the binary codebook does not impose any additional restrictions on the energy efficiency for the considered set of parameters. As a further research direction, it will be interesting to apply Fano's method to the case of a random number of active users and compare the results with \cite{achRandUser2023}.

\begin{figure}
\centering
\includegraphics{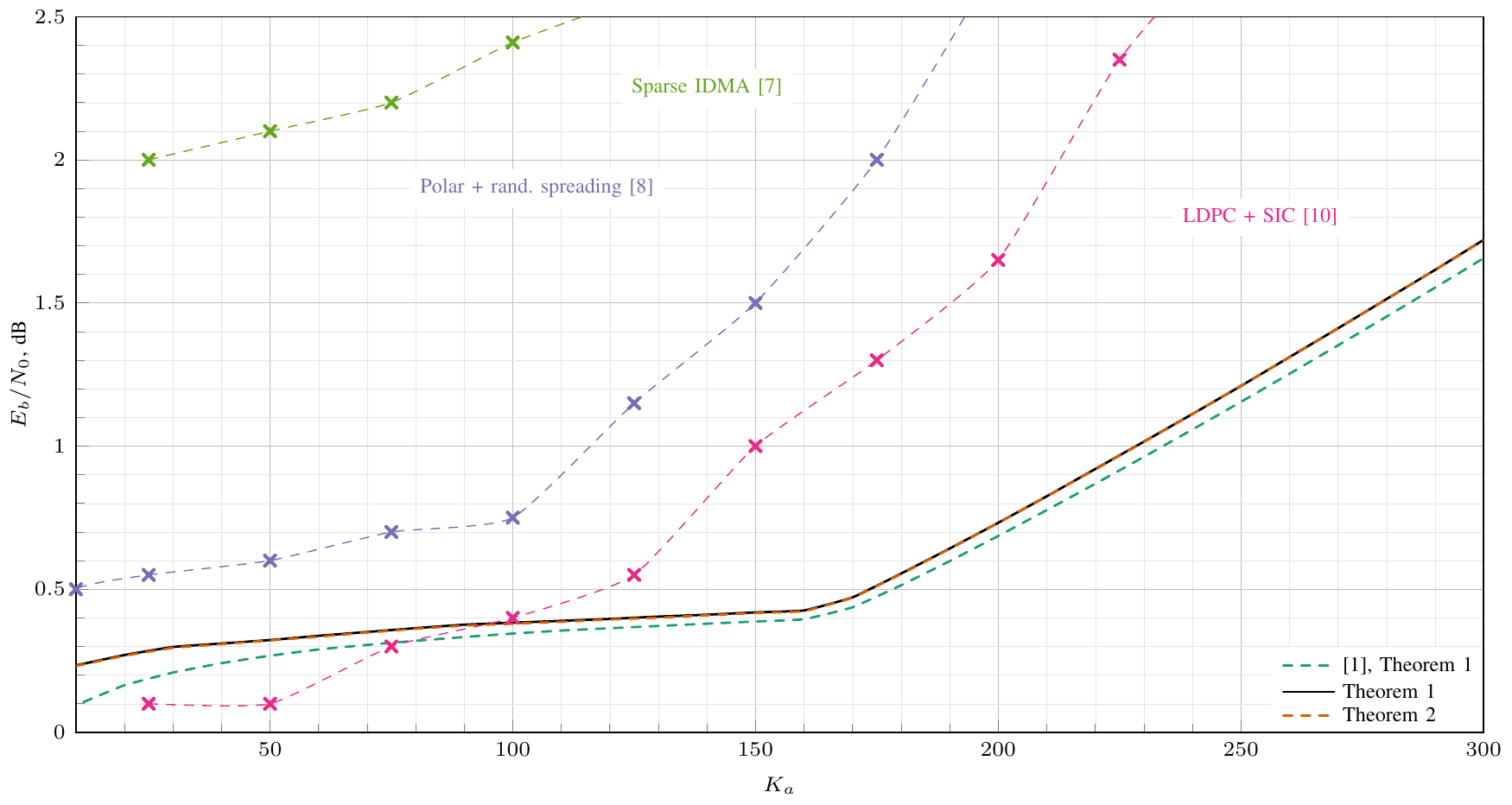}
\caption{AWGN MAC achievability bounds for $k=100$ information bits, frame length $n=30000$, $P_e=0.05$.}
\label{fig:ebno_ka}
\end{figure}

\bibliographystyle{IEEEtran}
\bibliography{main}

\vfill

\end{document}